\documentclass[preprint,12pt]{elsarticle}
\usepackage[margin=2.5cm]{geometry}
\usepackage{setspace}
\usepackage{amsmath,amssymb,amsthm,graphicx,booktabs,tabularx,array}
\usepackage{tikz,helvet}
\usetikzlibrary{arrows.meta,calc}
\usepackage{caption}
\usepackage{placeins}
\usepackage[hidelinks]{hyperref}
\hypersetup{pdfauthor={Chang Liu, Hao Shen and Chuanying Li}}
\biboptions{numbers,sort&compress}
\providecommand{\myfooterfont}[1]{}
\myfooterfont{\scriptsize\itshape}
\newtheorem{proposition}{Proposition}
\newcommand{\R}{\mathbb R}
\newcommand{\E}{\mathbb E}
\newcommand{\Yamp}{Y_{\mathrm{amp}}}
\newcommand{\result}[1]{\csname result@#1\endcsname}
\expandafter\def\csname result@assoc_betti_proxy\endcsname{0.925}
\expandafter\def\csname result@assoc_off_burn_weighted_pstag_gbar\endcsname{0.928}
\expandafter\def\csname result@baseline_radius_ood_gp_et\endcsname{4.91}
\expandafter\def\csname result@baseline_radius_ood_moe_et\endcsname{4.90}
\expandafter\def\csname result@baseline_random_gp_et\endcsname{7.37}
\expandafter\def\csname result@baseline_random_moe_et\endcsname{7.84}
\expandafter\def\csname result@be_radius_ood_1000_burn_gp_burn_moe_gain_pct\endcsname{-12.37}
\expandafter\def\csname result@be_radius_ood_1000_design_gp_burn_gp_gain_pct\endcsname{0.64}
\expandafter\def\csname result@be_radius_ood_250_design_gp_burn_gp_gain_pct\endcsname{5.19}
\expandafter\def\csname result@be_random_1000_burn_gp_burn_moe_gain_pct\endcsname{-40.21}
\expandafter\def\csname result@be_random_1000_design_gp_burn_gp_gain_pct\endcsname{43.72}
\expandafter\def\csname result@be_random_250_design_gp_burn_gp_gain_pct\endcsname{26.98}
\expandafter\def\csname result@burn_random_rmse\endcsname{0.285}
\expandafter\def\csname result@ch_paired_wall_median\endcsname{7.43}
\expandafter\def\csname result@dense_expert_gain_random_burn3_125\endcsname{15.53}
\expandafter\def\csname result@dense_representation_gain_radius_ood_gp_125\endcsname{18.97}
\expandafter\def\csname result@dense_representation_gain_radius_ood_moe_125\endcsname{20.84}
\expandafter\def\csname result@dense_representation_gain_random_gp_125\endcsname{20.45}
\expandafter\def\csname result@dense_representation_gain_random_moe_125\endcsname{24.26}
\expandafter\def\csname result@e1_radius_ood_gp_betti3_mse\endcsname{0.0853}
\expandafter\def\csname result@e1_radius_ood_gp_betti_vs_burn\endcsname{8.63}
\expandafter\def\csname result@e1_radius_ood_gp_burn3_gain\endcsname{52.98}
\expandafter\def\csname result@e1_radius_ood_gp_burn3_mse\endcsname{0.0934}
\expandafter\def\csname result@e1_radius_ood_gp_current3_mse\endcsname{0.1262}
\expandafter\def\csname result@e1_radius_ood_gp_design_mse\endcsname{0.1986}
\expandafter\def\csname result@e1_radius_ood_gp_oracle_burn_mse\endcsname{0.0106}
\expandafter\def\csname result@e1_radius_ood_gp_rand3_mse\endcsname{0.1988}
\expandafter\def\csname result@e1_radius_ood_gp_shuffle3_mse\endcsname{0.1987}
\expandafter\def\csname result@e1_radius_ood_moe_betti3_mse\endcsname{0.0861}
\expandafter\def\csname result@e1_radius_ood_moe_betti_vs_burn\endcsname{7.83}
\expandafter\def\csname result@e1_radius_ood_moe_burn3_gain\endcsname{41.84}
\expandafter\def\csname result@e1_radius_ood_moe_burn3_mse\endcsname{0.0934}
\expandafter\def\csname result@e1_radius_ood_moe_current3_mse\endcsname{0.1144}
\expandafter\def\csname result@e1_radius_ood_moe_design_mse\endcsname{0.1606}
\expandafter\def\csname result@e1_radius_ood_moe_oracle_burn_mse\endcsname{0.0097}
\expandafter\def\csname result@e1_radius_ood_moe_rand3_mse\endcsname{0.1607}
\expandafter\def\csname result@e1_radius_ood_moe_shuffle3_mse\endcsname{0.1624}
\expandafter\def\csname result@e1_random_gp_betti3_mse\endcsname{0.0795}
\expandafter\def\csname result@e1_random_gp_betti_vs_burn\endcsname{1.86}
\expandafter\def\csname result@e1_random_gp_burn3_gain\endcsname{60.25}
\expandafter\def\csname result@e1_random_gp_burn3_mse\endcsname{0.0810}
\expandafter\def\csname result@e1_random_gp_current3_mse\endcsname{0.1200}
\expandafter\def\csname result@e1_random_gp_design_mse\endcsname{0.2037}
\expandafter\def\csname result@e1_random_gp_oracle_burn_mse\endcsname{0.0109}
\expandafter\def\csname result@e1_random_gp_rand3_mse\endcsname{0.2040}
\expandafter\def\csname result@e1_random_gp_shuffle3_mse\endcsname{0.2038}
\expandafter\def\csname result@e1_random_moe_betti3_mse\endcsname{0.0798}
\expandafter\def\csname result@e1_random_moe_betti_vs_burn\endcsname{0.93}
\expandafter\def\csname result@e1_random_moe_burn3_gain\endcsname{53.84}
\expandafter\def\csname result@e1_random_moe_burn3_mse\endcsname{0.0806}
\expandafter\def\csname result@e1_random_moe_current3_mse\endcsname{0.1101}
\expandafter\def\csname result@e1_random_moe_design_mse\endcsname{0.1745}
\expandafter\def\csname result@e1_random_moe_oracle_burn_mse\endcsname{0.0113}
\expandafter\def\csname result@e1_random_moe_rand3_mse\endcsname{0.1765}
\expandafter\def\csname result@e1_random_moe_shuffle3_mse\endcsname{0.1746}
\expandafter\def\csname result@e6_radius_1000_expert\endcsname{39.14}
\expandafter\def\csname result@e6_radius_1000_gp_gain\endcsname{14.20}
\expandafter\def\csname result@e6_radius_1000_moe_gain\endcsname{20.76}
\expandafter\def\csname result@e6_radius_320_expert\endcsname{-4.05}
\expandafter\def\csname result@e6_radius_320_gp_gain\endcsname{42.36}
\expandafter\def\csname result@e6_radius_320_moe_gain\endcsname{40.62}
\expandafter\def\csname result@e6_random_1000_expert\endcsname{10.61}
\expandafter\def\csname result@e6_random_1000_gp_gain\endcsname{41.34}
\expandafter\def\csname result@e6_random_1000_moe_gain\endcsname{35.35}
\expandafter\def\csname result@e6_random_320_expert\endcsname{-14.16}
\expandafter\def\csname result@e6_random_320_gp_gain\endcsname{22.62}
\expandafter\def\csname result@e6_random_320_moe_gain\endcsname{23.76}
\expandafter\def\csname result@expert_random_burn3_1000\endcsname{0.52}
\expandafter\def\csname result@expert_random_design_1000\endcsname{14.33}
\expandafter\def\csname result@expert_random_design_250\endcsname{6.87}
\expandafter\def\csname result@grid_200_log10_yamp_max_abs_difference\endcsname{0.3914}
\expandafter\def\csname result@grid_200_log10_yamp_median_abs_difference\endcsname{0.0570}
\expandafter\def\csname result@grid_200_off_burn_weighted_pstag_gbar_max_abs_relative_pct\endcsname{90.3}
\expandafter\def\csname result@grid_200_off_burn_weighted_rhordt_max_abs_relative_pct\endcsname{12.2}
\expandafter\def\csname result@grid_200_off_burn_weighted_tion_kev_max_abs_relative_pct\endcsname{80.7}
\expandafter\def\csname result@grid_200_off_neutrons_max_abs_relative_pct\endcsname{35.9}
\expandafter\def\csname result@grid_400_log10_yamp_max_abs_difference\endcsname{0.3597}
\expandafter\def\csname result@grid_400_log10_yamp_median_abs_difference\endcsname{0.0230}
\expandafter\def\csname result@grid_400_off_burn_weighted_pstag_gbar_max_abs_relative_pct\endcsname{89.0}
\expandafter\def\csname result@grid_400_off_burn_weighted_rhordt_max_abs_relative_pct\endcsname{4.4}
\expandafter\def\csname result@grid_400_off_burn_weighted_tion_kev_max_abs_relative_pct\endcsname{80.4}
\expandafter\def\csname result@grid_400_off_neutrons_max_abs_relative_pct\endcsname{12.4}
\expandafter\def\csname result@grid_800_log10_yamp_max_abs_difference\endcsname{0.1599}
\expandafter\def\csname result@grid_800_log10_yamp_median_abs_difference\endcsname{0.0056}
\expandafter\def\csname result@grid_800_off_burn_weighted_pstag_gbar_max_abs_relative_pct\endcsname{87.3}
\expandafter\def\csname result@grid_800_off_burn_weighted_rhordt_max_abs_relative_pct\endcsname{1.4}
\expandafter\def\csname result@grid_800_off_burn_weighted_tion_kev_max_abs_relative_pct\endcsname{76.1}
\expandafter\def\csname result@grid_800_off_neutrons_max_abs_relative_pct\endcsname{3.6}
\expandafter\def\csname result@illustrative_break_even_paired\endcsname{63}
\expandafter\def\csname result@latency_1\endcsname{0.0139}
\expandafter\def\csname result@latency_987\endcsname{0.1438}
\expandafter\def\csname result@learn_radius_ood_gp_1000\endcsname{52.8}
\expandafter\def\csname result@learn_radius_ood_moe_1000\endcsname{41.1}
\expandafter\def\csname result@learn_random_gp_1000\endcsname{60.2}
\expandafter\def\csname result@learn_random_moe_1000\endcsname{53.8}
\expandafter\def\csname result@package_fit_seconds\endcsname{928.7}
\expandafter\def\csname result@primary_expert_radius_ood\endcsname{-0.01}
\expandafter\def\csname result@primary_expert_random\endcsname{0.52}
\expandafter\def\csname result@state_radius_ood_pressure\endcsname{0.202}
\expandafter\def\csname result@state_radius_ood_rhor\endcsname{0.279}
\expandafter\def\csname result@state_radius_ood_temperature\endcsname{0.211}
\expandafter\def\csname result@state_random_pressure\endcsname{0.165}
\expandafter\def\csname result@state_random_rhor\endcsname{0.235}
\expandafter\def\csname result@state_random_temperature\endcsname{0.168}
\expandafter\def\csname result@tail_radius_ood_gp\endcsname{1.0}
\expandafter\def\csname result@tail_radius_ood_moe\endcsname{4.2}
\expandafter\def\csname result@tail_random_gp\endcsname{35.5}
\expandafter\def\csname result@tail_random_moe\endcsname{28.2}
\expandafter\def\csname result@thermal_guard_count\endcsname{713}
\expandafter\def\csname result@thermal_max_guard_continuation_difference\endcsname{1.19\times10^{-11}}
\expandafter\def\csname result@thermal_max_target_tolerance_difference\endcsname{4.75\times10^{-5}}
\expandafter\def\csname result@thermal_n_audit\endcsname{128}
\expandafter\def\csname result@thermal_n_thermal\endcsname{3000}
\expandafter\def\csname result@thermal_solver_ms\endcsname{2.031}

\journal{Computer Physics Communications}
\date{October 10, 2026}
\begin{document}
\begin{frontmatter}
\title{Physical-state supervision for design-only Gaussian process surrogates of implosion simulations}
\author[iapcm]{Chang Liu\corref{cor1}}
\ead{liuchang@iapcm.ac.cn}
\cortext[cor1]{Corresponding author.}
\author[iapcm]{Hao Shen}
\author[iapcm]{Chuanying Li}
\address[iapcm]{Institute of Applied Physics and Computational Mathematics, Beijing, P.R. China}
\begin{abstract}
Physical diagnostics recorded during simulation can organize a sharply varying response, but using them as surrogate inputs normally requires another calculation at prediction. We construct a physical-state-supervised Gaussian process surrogate that learns these diagnostics from design variables. Cross-fitted state estimates augment a retained design representation; a global Gaussian process and an optional soft mixture of response-band experts use identical inputs. On a database of 1,987 paired one-dimensional implosions, learned burn-weighted states from calculations without alpha-particle heating reduce mean squared log-response error by \result{e1_random_gp_burn3_gain}\% and \result{e1_radius_ood_gp_burn3_gain}\% for the global predictor on random holdout and design-space extrapolation. Random-projection and shuffled-state supervision do not reproduce these gains. For CH, matched-input comparisons show that experts help at smaller training budgets but contribute little at 1,000 designs with this representation. A thermal-feedback benchmark validates the complete learned-state procedure, while 3,000 newly calculated Be/Cocktail designs expose weaker extrapolation gains and cases where experts degrade accuracy. A conditional state-error bound separates encoding sensitivity from routing effects. An open Python implementation reproduces the archived predictions and is released with the data, per-design results, and measured costs. Physical-state supervision is thus a useful representation for finite-data surrogates, whereas expert specialization should be selected by validation.
\end{abstract}
\begin{keyword}
surrogate modeling \sep physical-state supervision \sep Gaussian processes \sep mixture of experts \sep inertial confinement fusion
\end{keyword}
\end{frontmatter}

\section{Introduction}\label{sec:introduction}
An implosion calculation produces more than a fusion yield. It also records the assembly of the fuel, its thermodynamic state, and the evolution of confinement. These quantities enter physical descriptions of inertial confinement fusion (ICF), where compression, heating, and energy deposition determine whether a change in drive produces a modest or amplified burn response~\cite{lindl1995,betti2016}. Experimental progress from fuel gain to burning plasma and target gain emphasizes the importance of this feedback~\cite{hurricane2014,zylstra2022,abushawareb2024}. A surrogate trained only on design--yield pairs must infer their combined influence indirectly. The difficulty is pronounced in a finite database spanning a rapid change in yield amplification: nearby designs can have markedly different responses, while a global fit must also describe broad regions with weaker variation.

Using recorded states as additional regression inputs can reduce this difficulty, but it changes the prediction task if those states require a simulation for each new design. The distinction matters in design searches, where avoiding that calculation is the purpose of the surrogate. Here the physical states are available as auxiliary training labels. A separate model predicts them from a new design before the response model is evaluated. The question is whether the resulting approximation retains enough of their predictive value to improve a surrogate whose deployment input remains the design alone.

Physical knowledge can enter a learning problem through several different constructions~\cite{karniadakis2021,karpatne2017}. Physics-informed neural networks incorporate differential-equation residuals and initial or boundary conditions into training~\cite{raissi2019}. Multifidelity methods combine outputs from models of different cost and accuracy, through autoregressive or nonlinear relationships and other information-fusion strategies~\cite{kennedy2000,perdikaris2017,peherstorfer2018}. In ICF, data-driven modeling has also addressed multimodal outputs and discrepancies between simulation and experiment~\cite{hatfield2021,humbird2020}. Anirudh et al.~\cite{anirudh2020} use manifold and cycle consistencies to constrain an ICF surrogate and its inverse. These approaches establish that physical structure and auxiliary representations can guide learning; their particular constraints and prediction interfaces differ from the scalar-response problem considered here.

Training with information that is absent at prediction is an established learning setting~\cite{vapnik2009,vapnik2015,lopezpaz2016}. In the present construction, the additional information consists of named physical diagnostics, and the training procedure explicitly learns a map to these quantities. Auxiliary tasks can shape a shared representation in multitask learning~\cite{caruana1997}; here the physical outputs are learned explicitly in a separate stage. The original design descriptors remain available to the response model, so the diagnostics need not form a sufficient, closed description of the response.

Gaussian processes (GPs) provide a useful regression model at the sample sizes of interest~\cite{rasmussen2006}. Their behavior depends on the representation in which covariance is defined. Nonstationary responses can be treated by partitioning the input domain, as in treed GPs~\cite{gramacy2008}, by learning input transformations~\cite{snoek2014}, or by constructing deep kernels~\cite{wilson2016}. Learning a few physically supervised coordinates offers another way to change the geometry seen by a conventional kernel while retaining an explicit interpretation of the auxiliary outputs.

Local regression is complementary to this representation choice. Mixtures of experts assign different predictors to different parts of a response map~\cite{jacobs1991,jordan1994}, and mixtures of GP experts have a substantial history~\cite{rasmussen2002}. Moustapha and Sudret~\cite{moustapha2023} combine clustering, classification, and local GP regression for nonstationary and discontinuous functions, a direct precedent for classifier-based regional GP regression. Physical information can also act inside a mixture: Pitchforth et al.~\cite{pitchforth2026} embed physical relations in kernels and use switching variables to vary their contributions. Our construction instead uses standard GP kernels on a learned physical representation and combines their predictive means through a gate.

The two modeling choices---learning a representation and introducing experts---must be evaluated separately. A mixture with additional state information can outperform a design-based global model even when partitioning contributes little. Conversely, an informative representation need not eliminate local differences that matter at smaller sample sizes. Comparisons that change both inputs and architecture cannot establish which effect is responsible. We therefore use the same representation in the global GP, local experts, and gate, and compare global and mixture predictors at each representation and training budget.

This paper develops a physical-state-supervised GP architecture and evaluates expert specialization as an optional extension. The reference states come from calculations with alpha-particle energy deposition disabled; they are learned from designs and are never queried from a solver during inference. We give a conditional bound for errors transmitted through the predicted states and a separate decomposition of regression and routing errors. Numerical comparisons then assess the net value of learned states, the choice of physical supervision, and the additional contribution of experts under matched inputs. Random-projection and shuffled-state controls test whether the improvement can be reproduced by the same auxiliary pathway without the physical correspondence. A finite-fuel thermal model tests the complete learned-state procedure, and a newly reconstructed Be/Cocktail database examines its limits in another target family. The implementation and numerical record are openly released.

\section{Problem setting}\label{sec:problem}
Let $x\in\mathcal X\subset\R^d$ denote a design and let $y=f(x)$ be the scalar response of interest. The training database is
\begin{equation}
 \mathcal D_N=\{(x_i,z_i,y_i)\}_{i=1}^N,
 \qquad z_i=g_{\mathrm{ref}}(x_i)\in\R^q,
 \label{eq:database}
\end{equation}
where $g_{\mathrm{ref}}$ extracts transformed physical diagnostics from a reference calculation. A new query provides $x$, but neither $g_{\mathrm{ref}}(x)$ nor $f(x)$. We first construct deterministic design descriptors $b(x)=[x,\phi(x)]$ and learn a state map $E_\theta$. The response predictor receives
\begin{equation}
 \widehat z(x)=E_\theta(b(x)),\qquad
 u(x)=[b(x),\widehat z(x)].
 \label{eq:representation}
\end{equation}
The auxiliary labels in \eqref{eq:database} change only the training information: once training is fixed, $\widehat z$ is a deterministic function of $x$, and any benefit comes from a more useful representation for finite-sample regression.

For the ICF application, the response is the logarithm of the neutron-yield amplification,
\begin{equation}
 y=\log_{10}\Yamp,
 \qquad \Yamp=\frac{Y_{\mathrm{on}}}{Y_{\mathrm{off}}}.
 \label{eq:response}
\end{equation}
The on calculation retains alpha-particle heating; the off calculation disables this energy deposition while retaining the other modeled assembly and transport processes. This ratio is the yield amplification used to quantify alpha heating in ICF~\cite{betti2015}. Writing $\log_{10}Y_{\mathrm{on}}=\log_{10}Y_{\mathrm{off}}+y$ separates the no-feedback yield, which reflects compression and assembly, from the feedback response that produces the rapid change studied here. Because the off calculation is a numerical counterfactual, the construction applies to simulation databases rather than to measured yields.

We compare two three-component state sets. The \emph{Peak} set contains implosion-speed magnitude, in-flight aspect ratio, and fuel adiabat from the off calculation at peak implosion speed. The \emph{Burn} set contains the recorded off burn-weighted areal density, ion temperature, and stagnation-pressure diagnostic. Each recorded quantity is transformed by $\log_{10}|s_j|$ in the fixed units used by its database column; zero or nonfinite entries are treated as missing. The Burn variables relate to confinement, thermal conditions, and compression, whereas the Peak variables describe an earlier assembly diagnostic.

The CH representations in Table~\ref{tab:information} share a 20-component design vector. The map $\phi$ adds 25 algebraic descriptors: radiation-drive proxies, relative pulse amplitudes and durations, approximate mass loading, and geometric and density ratios. These require no solver evaluation, so the \emph{Design} comparator already includes physical information. Predicted states enlarge the response input from 45 to 48 components.

\begin{table}[t]
\centering\small
\caption{CH representations and information access. The CH prediction adapter takes the same 20 raw design variables in every row. The global GP and GP--MoE use identical representations within a row. State-supervised rows use additional offline training labels.}\label{tab:information}
\begin{tabularx}{\linewidth}{@{}l c X X@{}}
\toprule
Representation & Dimension & Auxiliary training labels & State input at prediction\\
\midrule
Design & 45 & None & None\\
Peak states & 48 & Off-state speed, IFAR and adiabat at peak speed & Encoder predictions\\
Burn states & 48 & Off-state burn-weighted $\rho R$, $T_i$ and $P_{\mathrm{stag}}$ & Encoder predictions\\
\bottomrule
\end{tabularx}
\end{table}

\section{Physical-state-supervised GP architecture}\label{sec:method}
The core predictor is a state encoder followed by a global GP on the augmented representation in \eqref{eq:representation}. Both stages are trained sequentially. Figure~\ref{fig:architecture} separates this prediction path from an optional expert extension, which uses the same inputs and adds a learned gate.

\begin{figure}[t]
\centering
\resizebox{\linewidth}{!}{\begingroup\setstretch{1}
\begin{tikzpicture}[x=1cm,y=1cm,>=Latex,
 box/.style={draw=black,line width=.55pt,fill=white,align=center,minimum height=.85cm,text width=1.7cm,inner sep=3pt,font=\sffamily\fontsize{9}{11}\selectfont},
 flow/.style={->,draw=black,line width=.65pt},
 training/.style={->,draw=black,densely dashed,line width=.65pt},
 note/.style={font=\sffamily\fontsize{8}{10}\selectfont,align=center}]
\fill[blue!5] (1.45,1.15) rectangle (6.15,6.15);
\fill[green!5] (8.85,-1.8) rectangle (16.8,1.0);
\node[font=\sffamily\bfseries\fontsize{10}{12}\selectfont] at (3.8,5.85) {Physical representation};
\node[box,text width=1.1cm] (x) at (.4,3.1) {Design\\$x$};
\node[box,text width=1.45cm] (b) at (2.55,3.1) {Design map\\$b(x)$};
\node[box,text width=1.7cm] (enc) at (5,3.1) {State encoder\\$\widehat z=E(b)$};
\node[box,text width=2cm] (z) at (5,4.65) {Recorded off states\\training labels};
\node[box,text width=1.4cm] (u) at (7.4,3.1) {Shared input\\$[b,\widehat z]$};
\node[box] (gp) at (10.25,3.1) {Global GP\\$\mu_g,\ s_g$};
\node[box,text width=1.15cm] (y) at (15.9,3.1) {Output\\$\widehat y$};
\node[box] (experts) at (10.25,-.05) {Band experts\\$\mu_1,\ldots,\mu_K$};
\node[box] (gate) at (12.95,-.05) {Soft gate\\$w(u,d_g)$};
\node[box] (blend) at (15.65,-.05) {Global--local\\blend};
\draw[flow] (x)--(b);\draw[flow] (b)--(enc);\draw[flow] (enc)--(u);
\draw[training] (z.south)--(enc.north);
\draw[flow] (b.south)--(2.55,1.75)--(7.4,1.75)--(u.south);
\node[note,fill=blue!5,inner sep=2pt] at (4.65,1.75) {Retained design path};
\draw[flow] (u.east)--(gp.west);
\draw[flow] (gp.east)--node[note,above]{Core GP prediction}(y.west);
\coordinate (branch) at (8.5,3.1);
\draw[flow] (branch)--(8.5,-.05)--(experts.west);
\draw[flow] (8.5,1.4)--(12.35,1.4)--($(gate.north)+(-.6,0)$);
\draw[flow] (gp.south)--(10.25,2.2)--(13.5,2.2)--($(gate.north)+(.55,0)$);
\node[note,fill=white,inner sep=2pt] at (12.3,2.2) {$d_g$};
\draw[flow] (14.25,3.1)--(14.25,1.4)--(15.65,1.4)--(blend.north);
\node[note,fill=white,inner sep=2pt] at (14.25,1.85) {$\mu_g$};
\draw[flow] (gate.east)--node[note,above]{$w_k$}(blend.west);
\draw[flow] (experts.south)--(10.25,-.95)--(15.2,-.95)--($(blend.south)+(-.45,0)$);
\draw[flow] (blend.east)--(17.1,-.05)--(17.1,3.1)--(y.east);
\node[note,fill=white,inner sep=2pt] at (17.1,1.75) {When\\enabled};
\node[note] at (12.85,-1.45) {Optional expert extension: $\lambda=0.5$};
\node[note] at (3.8,1.35) {Cross-fitted states for response training};
\end{tikzpicture}
\endgroup}
\caption{Physical-state supervision with a global GP and an optional expert branch. The upper path predicts the response directly from design variables through a learned physical representation. Recorded off states supervise the encoder only during training; the retained design path and predicted states form the common response input. The shaded lower branch adds response-band experts, global predictive diagnostics, and soft blending when enabled.}\label{fig:architecture}
\end{figure}

\subsection{Cross-fitted state supervision}\label{sec:encoder}
A direct in-sample fit of the state encoder can supply unrealistically accurate auxiliary inputs to the response model. We therefore partition the outer training set into five folds $I_1,\ldots,I_5$ and construct
\begin{equation}
 \widehat z_i^{\mathrm{cf}}=E_{\theta^{(-j)}}(b(x_i)),
 \qquad i\in I_j,
 \label{eq:crossfit}
\end{equation}
where $E_{\theta^{(-j)}}$ is fitted without fold $I_j$. Missing state labels are filled with medians computed inside the encoder's fitting subset. The encoder is a multi-output GP with a common input covariance and separately normalized output columns; this is a simpler covariance structure than multi-task GP models with learned output correlations~\cite{bonilla2008,alvarez2012}. After the cross-fitted predictions have been assembled, a final encoder is fitted on the complete outer training set for deployment.

The response models are trained on $u_i^{\mathrm{cf}}=[b(x_i),\widehat z_i^{\mathrm{cf}}]$. Test representations use the full-training encoder. The use of predictions from held-out training folds is related to stacked regression~\cite{wolpert1992,breiman1996}. Cross-fitting makes the training inputs, like the test inputs, estimated rather than recorded states, although each fold encoder sees fewer designs than the deployment encoder. Both stages use only outer-training observations.

\subsection{Global response regression}\label{sec:regression}
The encoder and CH response GPs use a Mat\'ern-$3/2$ covariance with automatic relevance determination. On standardized inputs $u,v$, the covariance is
\begin{align}
 k(u,v)&=\sigma_f^2(1+\sqrt3\,d_\ell(u,v))
              \exp[-\sqrt3\,d_\ell(u,v)]+\sigma_n^2\delta_{uv},\label{eq:kernel}\\
 d_\ell(u,v)^2&=\sum_{j=1}^{p}\frac{(u_j-v_j)^2}{\ell_j^2}.
\end{align}
Feature imputation and standardization are fitted on the relevant training subset. The amplitude and length scales are initialized at one and bounded between $10^{-2}$ and $10^2$; the white-noise variance is initialized at $10^{-3}$ and bounded between $10^{-6}$ and one. Marginal-likelihood optimization uses no random restarts, and $10^{-6}$ is added to the covariance diagonal for numerical regularization. Outputs are normalized during fitting.

The response GP is fitted to $(u_i^{\mathrm{cf}},y_i)$. At prediction, it returns $\widehat f_{\mathrm{GP}}(x)=\mu_g(u(x))$, with posterior scale $s_g$ available for diagnostics.

\subsection{Training and design-only prediction}\label{sec:algorithm}
The core model stores the deployment state encoder and the global response GP. The CH adapter constructs deterministic descriptors from the raw design; the generic interface accepts a fixed design schema. Algorithm~1 requires physical labels only for training.

\begin{center}\begin{minipage}{.96\linewidth}\small
\hrule\vspace{5pt}
\textbf{Algorithm 1. Physical-state-supervised global GP}\par
\emph{Training input:} designs $x_i$, transformed reference states $z_i$, responses $y_i$, and a fixed split seed.
\begin{enumerate}\setlength{\itemsep}{1pt}\setlength{\parskip}{0pt}
\item Construct deterministic design descriptors $b(x_i)$.
\item Fit five fold encoders with fold-local state imputation; assemble $\widehat z_i^{\mathrm{cf}}$ by \eqref{eq:crossfit}.
\item Fit the deployment encoder on all training designs and state labels.
\item Fit the global response GP to $u_i^{\mathrm{cf}}=[b(x_i),\widehat z_i^{\mathrm{cf}}]$ and $y_i$.
\item For a new design $x$, compute $u(x)=[b(x),E_\theta(b(x))]$ and return $\mu_g(u(x))$.
\end{enumerate}
\emph{Optional extension:} fit response-band experts and a predictive gate as specified in Section~\ref{sec:gate}; use \eqref{eq:predictor} when this extension is enabled.
\vspace{3pt}\hrule
\end{minipage}\end{center}

\subsection{Optional response-band experts}\label{sec:gate}
Experts are assigned operational response bands. With $c_1=\log_{10}2$ and $c_2=\log_{10}20$, the training label is
\begin{equation}
 r(y)=\begin{cases}
 1,&y\le c_1,\\
 2,&c_1<y\le c_2,\\
 3,&y>c_2.
 \end{cases}
 \label{eq:bands}
\end{equation}
These labels train the experts and gate; at inference their probabilities are estimated from the design.

Expert $k$ is fitted only to designs with training label $r_i=k$, producing $\mu_k(u)$. If a band contains fewer than 25 training designs, its expert prediction is replaced by $\mu_g$. Each fitted expert has its own input scaling and covariance parameters.

The optional gate uses both $u$ and global predictive diagnostics. Write $\widetilde s=\max(s_g,10^{-6})$ and $a_j=(\mu_g-c_j)/\widetilde s$. Gaussian band probabilities are
\begin{equation}
 \pi_1=\Phi(-a_1),\qquad
 \pi_2=\Phi(-a_2)-\Phi(-a_1),\qquad
 \pi_3=1-\Phi(-a_2),
 \label{eq:bandprob}
\end{equation}
with numerical clipping and normalization. They enter the gate as predictive features. The diagnostic vector is
\begin{equation}
 d_g=(\mu_g,\widetilde s,a_1,a_2,|a_1|,|a_2|,
              \pi_1,\pi_2,\pi_3,H(\pi),\pi_1+\pi_3),
 \label{eq:diagnostics}
\end{equation}
where $H(\pi)=-\sum_k\pi_k\log\pi_k$. Its 11 components summarize response level, predictive scale, and proximity to band boundaries. 

Training diagnostics are obtained from two-fold, response-stratified cross-fitted global GPs. This split acts on the already assembled $u_i^{\mathrm{cf}}$: these two diagnostic folds reuse the previously fitted state representation. At prediction, diagnostics come from the full-training global GP.

After standardization of $[u,d_g]$, a class-balanced multinomial logistic model with inverse regularization parameter $C=0.7$ and a 2,000-iteration limit gives
\begin{equation}
 w_k(u,d_g)=\frac{\exp(\beta_k^T[u,d_g]+b_k)}
 {\sum_{j=1}^{3}\exp(\beta_j^T[u,d_g]+b_j)},
 \qquad w_k\ge0,\quad\sum_k w_k=1.
 \label{eq:gate}
\end{equation}
The regression output is
\begin{equation}
 \widehat f(x)=(1-\lambda)\mu_g(u(x))
       +\lambda\sum_{k=1}^{3}w_k(u(x),d_g(x))\mu_k(u(x)),
 \qquad \lambda=\tfrac12.
 \label{eq:predictor}
\end{equation}
The global path is retained even when one expert dominates the gate, and $\lambda$ is fixed in all comparisons. Taking $\lambda=0$ recovers the core global predictor. All expert comparisons use matched representations and training splits.

\section{Error propagation}\label{sec:analysis}
The analysis describes perturbations of a fixed fitted predictor. It separates state-estimation error from the additional routing error of the optional mixture.

\subsection{State-estimation error}
Write the fitted response map as
\begin{equation}
 F(b,z)=(1-\lambda)\mu_g(b,z)+\lambda\sum_k w_k(b,z)\mu_k(b,z),
 \label{eq:F}
\end{equation}
where $\lambda=0$ gives the core GP and $w$ includes its dependence on global predictive diagnostics.
\begin{proposition}\label{prop:state}
On a state domain containing $z$ and $\widehat z$, let $\mu_g$ and the expert means be $L_g$- and $L_e$-Lipschitz in $z$. Suppose $\|w(b,z_1)-w(b,z_2)\|_1\le L_w\|z_1-z_2\|_2$ and the expert span is at most $\overline D$. Then
\begin{align}
 |F(b,\widehat z)-F(b,z)|&\le C\|\widehat z-z\|_2,\label{eq:statebound}\\
 C&=(1-\lambda)L_g+\lambda L_e+\lambda\overline D L_w/2.\label{eq:stateconstant}
\end{align}
If these constants hold uniformly over the evaluation distribution and the second moments are finite, then
\begin{equation}
 \sqrt{R\{F(b,\widehat z)\}}\le\sqrt{R\{F(b,z)\}}+C\sqrt{\E\|\widehat z-z\|_2^2},
 \quad R(v)=\E[(v-y)^2].\label{eq:staterisk}
\end{equation}
\end{proposition}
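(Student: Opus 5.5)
The plan is to prove the pointwise bound \eqref{eq:statebound} by splitting $F$ into its global and mixture parts, then obtain \eqref{eq:staterisk} from Minkowski's inequality in $L^2$. Fix $b$ and write $\delta=\|\widehat z-z\|_2$. The difference splits as
\[
F(b,\widehat z)-F(b,z)=(1-\lambda)\bigl[\mu_g(b,\widehat z)-\mu_g(b,z)\bigr]+\lambda\bigl[M(\widehat z)-M(z)\bigr],
\qquad M(\cdot)=\sum_k w_k(b,\cdot)\mu_k(b,\cdot).
\]
The first bracket is at most $L_g\delta$ in absolute value by the Lipschitz assumption. This already settles the core case $\lambda=0$.

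For the mixture term I will add and subtract $\sum_k w_k(b,\widehat z)\mu_k(b,z)$. This gives
\[
M(\widehat z)-M(z)=\sum_k w_k(b,\widehat z)\bigl[\mu_k(b,\widehat z)-\mu_k(b,z)\bigr]+\sum_k\bigl[w_k(b,\widehat z)-w_k(b,z)\bigr]\mu_k(b,z).
\]
The first sum is a convex combination, since $w_k\ge0$ and $\sum_k w_k=1$, so it is bounded by $L_e\delta$.

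The second sum is the routing term and is the step I expect to need care. The factor $\overline D/2$ comes from the fact that the weight differences $\Delta_k=w_k(b,\widehat z)-w_k(b,z)$ sum to zero. Because of this, any constant $m$ can be subtracted from every $\mu_k(b,z)$ without changing the sum. I will choose $m$ to be the midpoint of the smallest and largest expert means at $(b,z)$. Then $|\mu_k(b,z)-m|\le\overline D/2$, where the expert span is $\max_k\mu_k-\min_k\mu_k\le\overline D$. Hölder's inequality then bounds the routing term by $(\overline D/2)\|\Delta\|_1\le(\overline D/2)L_w\delta$. Multiplying by $\lambda$ and combining with the two earlier bounds gives the constant $C$ in \eqref{eq:stateconstant}.

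For the risk statement I will write
\[
F(b,\widehat z)-y=\bigl[F(b,z)-y\bigr]+\bigl[F(b,\widehat z)-F(b,z)\bigr]
\]
and apply Minkowski's inequality in $L^2$ of the evaluation distribution. The finite second moments make both terms square-integrable. The uniform validity of the constants lets me apply the pointwise bound inside the expectation, so $\E[(F(b,\widehat z)-F(b,z))^2]\le C^2\E\|\widehat z-z\|_2^2$. Taking square roots yields \eqref{eq:staterisk}.

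One point to state explicitly is that the Lipschitz conditions are applied on a domain containing both $z$ and $\widehat z$, as the hypothesis provides. Also, $w$ depends on $z$ through $d_g$ as well, and this dependence is absorbed into $L_w$ as assumed.
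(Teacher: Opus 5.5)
Your proposal is correct and follows the paper's own proof essentially step for step. Both use the same add-and-subtract split of the mixture term with weights at $\widehat z$, the convex-combination bound giving $L_e$, the midpoint shift that exploits $\sum_k(w_k(b,\widehat z)-w_k(b,z))=0$ to obtain $\overline D\|w(b,\widehat z)-w(b,z)\|_1/2$, and the $L^2$ triangle (Minkowski) inequality for the risk bound; your explicit remarks on $w_k\ge0$, $\sum_k w_k=1$, and on absorbing the $d_g$-dependence into $L_w$ just make precise what the paper leaves implicit.
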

\begin{proof}
Abbreviate evaluations at $\widehat z$ by primes. Decompose the mixture difference as $\sum_k w'_k(\mu'_k-\mu_k)+\sum_k(w'_k-w_k)\mu_k$. The first sum is bounded by $L_e\|\widehat z-z\|_2$. In the second, the weight differences sum to zero; subtracting the midpoint of the largest and smallest means bounds it by $\overline D\|w'-w\|_1/2$. Adding the global term proves \eqref{eq:statebound}, and the $L^2$ triangle inequality gives \eqref{eq:staterisk}.
\end{proof}
For the core GP, $C=L_g$. State accuracy and response sensitivity jointly control this conditional perturbation. The reference $F(b,z)$ uses the same fitted map; an exact-state model retrained on recorded labels is a different comparison. The thermal experiment evaluates the fixed-map substitution without estimating a uniform Lipschitz constant.

\subsection{Optional routing and regression interaction}
At a fixed representation, let $r=r(y)$ be the observed response band and define
\begin{equation}
 p=(1-\lambda)\mu_g+\lambda\sum_k w_k\mu_k,\qquad
 q=(1-\lambda)\mu_g+\lambda\mu_r.\label{eq:routingreference}
\end{equation}
For leakage $L=1-w_r$ and expert span $D=\max_{j,k}|\mu_j-\mu_k|$,
\begin{equation}
 |p-q|\le\lambda LD,\qquad
 \sqrt{R(p)}\le\sqrt{R(q)}+\lambda\sqrt{\E[L^2D^2]}.\label{eq:routingbound}
\end{equation}
The first inequality follows from $p-q=\lambda\sum_{k\ne r}w_k(\mu_k-\mu_r)$; the second again uses the $L^2$ triangle inequality. With $\delta=p-q$, the exact identity
\begin{equation}
 R(p)=R(q)+2\E[(q-y)\delta]+\E[\delta^2]\label{eq:riskidentity}
\end{equation}
retains a cross term of either sign. An assigned expert can itself have substantial regression error, so improved routing need not improve response accuracy. The numerical comparisons therefore evaluate expert utility directly at matched state inputs; the software tests separately verify the identity and pointwise bound.

\section{Implementation and computational cost}\label{sec:implementation}
The implementation separates the statistical construction from the simulator that supplies its labels. The \texttt{pssgp} Python package accepts a design matrix, an auxiliary-state matrix, and a response vector through \texttt{PhysicalStateSurrogate.fit(X,Z,y)}. Once fitted, \texttt{predict(X)} returns global and mixture predictions using only the design matrix. The numbers of design variables and states, the response-band boundaries, and the state transform are configurable. A DataFrame preserves the exact column schema, whereas array input is checked for its fitted dimension. Problem adapters define the units and transformations used in the reported experiments. A minimal use of the core predictor is
\begin{center}\begin{minipage}{.92\linewidth}\small
\begin{verbatim}
from pssgp import PhysicalStateSurrogate

# X: design table; Z: training-only states; y: scalar response
model = PhysicalStateSurrogate(backend="gp", folds=5, seed=42)
model.fit(X, Z, y)                   # cross-fitted encoder + global GP
y_new = model.predict(X_new)["gp"]   # design-only prediction
\end{verbatim}
\end{minipage}\end{center}
With \texttt{backend="moe"}, \texttt{predict} also returns the optional mixture prediction under key \texttt{"moe"}; \texttt{representation(X)} returns the augmented input $u(x)$, and a deterministic descriptor map $\phi$ can be supplied through \texttt{design\_features}.

The package uses NumPy, SciPy, and scikit-learn~\cite{harris2020,virtanen2020,pedregosa2011}; the publication figures use Matplotlib~\cite{hunter2007}. Empirical panels are rendered deterministically from the archived data with recorded source hashes. Encoder folds can run in independent processes, with one linear-algebra thread per worker to avoid oversubscription. A fresh thermal example generates a small database, trains the model, and saves predictions without an external hydrodynamics code. Regenerating ICF trajectories requires the physical simulator and material tables; fitting starts from the released tables.

For dense GPs, a covariance factorization on $n$ cases costs $O(n^3)$ and stores $O(n^2)$ entries~\cite{rasmussen2006}. With a fixed number of states and five encoder folds, the encoder stage remains cubic in $N$, with a larger constant than a single GP fit. The core response stage adds one global fit. The optional extension adds two diagnostic fits and expert costs proportional to $\sum_kN_k^3$, apart from repeated covariance evaluations during hyperparameter optimization. A query requires state and response means and a global variance; the latter involves a dense triangular solve and can cost $O(N^2)$. These costs restrict dense fitting to moderate databases.

Inducing-variable and stochastic variational GP methods offer alternatives for larger training sets~\cite{titsias2009,hensman2013}. The present implementation keeps dense covariance factorizations so that packaging can be checked directly against the accepted scientific calculations. For the CH Burn configuration with $N=1,000$ and seed 42, the general package reproduces cross-fitted states, test states, global predictions, and mixture predictions with zero maximum difference in the recorded Linux environment, against a fixed acceptance tolerance of $10^{-8}$. Seven unit tests and the independent thermal example also pass. The check covers one complete configuration. Because hyperparameter optimization proved sensitive to array layout, the package preserves the archived order of operations.

\begin{table}[tbp]
\centering\small
\caption{Observed wall-clock costs. Inference times are medians of seven repeated calls for the fitted CH package; the batch time is for the entire batch. Physical calculations and model timings refer to their distinct recorded environments and workloads.}\label{tab:cost}
\begin{tabularx}{\linewidth}{@{}l r X@{}}
\toprule
Operation & Time (s) & Measurement context\\
\midrule
CH on or off invocation & \result{ch_paired_wall_median} & Median of 3,974 archived calls; Windows/Intel Fortran, up to 12 independent workers; includes output processing.\\
CH surrogate fit & \result{package_fit_seconds} & Linux, $N=1,000$; five encoder-fold workers, single-thread linear algebra.\\
CH prediction, one design & \result{latency_1} & Encoder and both response outputs; one thread.\\
CH prediction, 987 designs & \result{latency_987} & The same model and prediction interface.\\
\bottomrule
\end{tabularx}
\end{table}

Table~\ref{tab:cost} reports the measured costs. The thermal ODE takes only \result{thermal_solver_ms}~ms per design, so that benchmark tests the information pathway rather than cost. For an existing training database, a simple cost accounting gives the break-even query count
\begin{equation}
 Q_* = \frac{C_{\mathrm{fit}}+C_{\mathrm{additional\ data}}}{C_{\mathrm{pair}}-C_{\mathrm{prediction}}},
 \qquad C_{\mathrm{pair}}>C_{\mathrm{prediction}},\label{eq:breakeven}
\end{equation}
where one direct amplification query requires both on and off calculations. Taking the existing database as a sunk cost, approximating $C_{\mathrm{pair}}$ by twice the archived median invocation time, and inserting the recorded complete-package fit and single-query times gives an illustrative threshold of \result{illustrative_break_even_paired} queries after rounding up. The timings come from different platforms, so this is a cost-accounting example, not a measured application speedup. New data generation, validation, and retraining increase the threshold. The experiments establish accuracy for scalar responses of one-dimensional calculations; extension to costlier multidimensional models would require its own physical-state definitions and validation.

\section{Test problems and evaluation protocol}\label{sec:protocol}
The principal test uses a recorded ICF database; a finite-fuel thermal problem and a newly generated Be/Cocktail database test the same procedure under different conditions.

\subsection{One-dimensional implosion database}
The CH database contains 1,987 designs generated with a one-dimensional spherical MULTI-IFE workflow~\cite{ramis2016}. Twenty design controls describe the imposed radiation drive and the target's layer geometry and density scales. Eleven drive coordinates control four pulse-amplitude segments, the radiation-temperature normalization, total duration, a time-shape exponent, the number of tabulated points assigned to the foot, and three relative pulse-interval redistributions. The remaining coordinates scale four layer thicknesses and five densities. The relative interval changes preserve the duration set by the global time scale. The released input table gives the coordinate order, bounds, and realized values. Every paired calculation uses 200 Lagrangian cells, distributed as 40, 60, 25, and 75 cells across the four layers; this resolution was verified against all 1,987 archived input files.

The label CH denotes the historical capsule family; its archived ablator setup uses a carbon material table. Each design has paired on and off calculations, their neutron yields, and trajectory diagnostics; disabling alpha deposition retains the nuclear-reaction diagnostics. The analysis uses a corrected rerun summary whose design identifiers link the inputs, paired yields, and state readouts. The original executable and runtime records were not retained, so this specified numerical database, rather than the historical runs, is the object of study.

The database spans weak and strongly amplified responses (Figure~\ref{fig:physics}a). The same case identifiers provide Peak and Burn supervision, so changing the state set does not alter the target response. Areal density is averaged with nonnegative neutron-production increments between stored times. The temperature and pressure columns instead use local burn-weighting proxies: each cell/interval receives a nonnegative weight proportional to $\rho^2 f_T^2\mathcal R(T_i)m_A\Delta t$, where $f_T$ is the recorded tritium fraction, $\mathcal R$ is the implemented DT-reactivity shape, and $m_A=\rho|\Delta r|$ is an areal-mass proxy. Adjacent-time values are midpoint averaged. The extraction functions and unit conversions accompany the data.

A small numerical sensitivity audit retains six designs, two from each response band, and repeats both feedback conditions on 200, 400, 800, and 1,600 cells. Layer interfaces and mesh grading are scaled consistently while the time controls are held fixed. All 48 calculations reach the prescribed endpoint. Their differences are reported in Section~\ref{sec:grid}.

\subsection{Finite-fuel thermal response}
The supporting problem couples an Arrhenius rate to heat release and cooling. Let $c(t)$ denote reacted fraction and $T_0$ the initial temperature. For a two-unit time interval,
\begin{align}
 \dot T&=S A\exp(-E/T)(1-c)-a(T-T_0),\label{eq:thermalT}\\
 \dot c&=A\exp(-E/T)(1-c),\qquad T(0)=T_0,\quad c(0)=0.
 \label{eq:thermalc}
\end{align}
Here $S>0$ sets the temperature rise available from complete conversion and $a>0$ is a cooling rate. The finite-fuel factor $1-c$ limits the total release.

Eight independent normalized design coordinates are mapped affinely to the
ordered variables $(T_0,P,\varphi,R,E,\log_{10}A,k_w,q)$. The corresponding
lower and upper bounds are
\begin{align*}
 \ell&=(550,0.5,0.2,0.02,1.4\times10^4,7,5,0.6),\\
 u&=(750,3,1,0.10,2.2\times10^4,9.5,60,1.4).
\end{align*}
In the benchmark's fixed unit convention,
\begin{equation}
 c_0=\frac{40P}{T_0/600},\qquad
 S=\frac{2.5\times10^6 q c_0\varphi}{1200},\qquad
 a=\frac{3k_w}{1200R}.
\end{equation}
The reference evolution freezes the rate at $k_0=A\exp(-E/T_0)$ while retaining fuel depletion. It has the closed-form solution
\begin{align}
 c_{\mathrm{ref}}(t)&=1-\exp(-k_0t),\label{eq:thermalrefc}\\
 T_{\mathrm{ref}}(t)&=T_0+S k_0\frac{\exp(-k_0t)-\exp(-at)}{a-k_0}.
 \label{eq:thermalrefT}
\end{align}
For $a=k_0$, the second expression is continued as $T_0+S k_0t\exp(-k_0t)$. Its peak on $[0,2]$ is evaluated at the clipped stationary time $\log(a/k_0)/(a-k_0)$, with limit $1/a$. The response and auxiliary states are
\begin{align}
 y_{\mathrm{th}}&=\log_{10}\frac{c(2)}{c_{\mathrm{ref}}(2)},\label{eq:thermaltarget}\\
 z_{\mathrm{th}}&=\bigl(\log_{10}T_{\mathrm{ref,peak}},\,
 \log_{10}c_{\mathrm{ref}}(2),\,\log_{10}(Sk_0)\bigr).
 \label{eq:thermalstates}
\end{align}
The implementation floors the conversion quantities at $10^{-12}$ and $Sk_0$ at $10^{-9}$ before taking logarithms. The state vector includes the reference's final reacted fraction, which also forms the target denominator and is therefore a particularly informative auxiliary target. The base representation adds $\log_{10}[Sk_0/(aT_0)]$, $\log_{10}(S/T_0)$, and $T_0/E$ to the eight normalized coordinates. The complete learned-state comparison withholds all three analytic states at prediction, learns them by five-fold cross-fitting from the design descriptors, and supplies the estimates to both response backends. Separate fits with the exact analytic states provide an information-condition reference.

The full response is integrated by LSODA with relative tolerance $10^{-6}$, absolute tolerance $10^{-9}$, and maximum step $0.02$. A numerical guard terminates integration if $T$ reaches $100T_0$, assigning complete conversion. This guard is activated for \result{thermal_guard_count} of \result{thermal_n_thermal} designs and is part of the recorded target-generation protocol. A stored audit of \result{thermal_n_audit} cases checks tighter integration, analytic reference evaluation, and continuation beyond the guard. The maximum change in the log-response under tighter tolerances is $\result{thermal_max_target_tolerance_difference}$; the maximum guard-continuation difference in the audited cases is $\result{thermal_max_guard_continuation_difference}$.

\subsection{Reconstructed Be/Cocktail database}
The second ICF target family varies peak and foot drive amplitudes and the thicknesses of DT gas, DT ice, Cocktail, and Be layers. A fixed random sample of 3,000 designs is drawn without using responses from the historical design pool. The archived input generator and ten material tables are recovered, and all formatted inputs match the selected historical coordinates. Because the historical production executable could not be identified, all targets and off-state diagnostics are recomputed with a single recorded source build.

The 200-cell spherical setup retains the prescribed 20~ns endpoint. Disabling alpha deposition leaves neutron production and burn diagnostics active. All 6,000 calls complete and all off trajectories pass the zero-alpha-energy check. The resulting library contains 2,067 low-band, 653 intermediate-band, and 280 high-band designs under \eqref{eq:bands}. Burn states use the same diagnostic definitions as CH. The design representation has 17 coordinates: six raw controls, six logarithmic controls, and five drive or geometry ratios. The learned-state representation has 20 coordinates.

Random holdout fixes 600 test designs. The second split reserves the 600 largest physical outer radii, calculated as the sum of the four layer thicknesses. This is a physical-radius split, unlike the normalized design-space radius used for CH. Five seeds select nested training sets of 250 and 1,000 from the remaining 2,400 designs. The validated package supplies learned-state GP and GP--MoE predictions; the archived backend supplies the design-only controls with the same kernel and gate settings. No hyperparameters or state definitions are selected using these Be test results.

\subsection{Training and evaluation splits}
CH random holdout is stratified using amplification boundaries $1.5,2,5,20,100$. At $N=1,000$, the remaining 987 designs form the test set. The second split extrapolates in a design-space radius,
\begin{equation}
 r_x=\left[\frac1{20}\sum_{j=1}^{20}
 \left(\frac{x_j-m_j}{s_j}\right)^2\right]^{1/2},
 \label{eq:radius}
\end{equation}
where $m_j$ and $s_j$ are the median and interquartile range of the database's design coordinates. The central 60\% supplies the training pool; the outer 20\%, comprising 398 designs, supplies the test set. The intervening designs are excluded. The split uses covariates of the whole database but no test responses. The CH figures use ``Radius OOD'' to denote this design-space split.

The main $N=1,000$ comparisons use ten paired seeds per split. Learning curves use $N=125,250,375,500,750,1,000$. Smaller training sets are nested prefixes of a seeded permutation of the parent training set; the parent test set is held fixed. State encoders are rebuilt at each budget. The six-budget curves use ten common seeds on random holdout and seven on radius OOD, giving 102 complete split--seed--budget configurations. The remaining three OOD seeds are excluded from the curves because their parent $N=1,000$ fits failed the $10^{-8}$ reproduction check; their identifiers and audit records are released. The additional budgets $125,375,750$ were fixed before their outputs were examined. The separate primary $N=1,000$ comparison uses all ten saved seeds on both splits.

The thermal database has \result{thermal_n_thermal} designs. Random holdout uses a fixed pool of 2,000 and a test set of 1,000. A normalized radius from the center of the eight-dimensional design cube defines a central-60\% training pool and an outer-20\% test set for the second split. Five seeds are used at each of $N=320$ and $N=1,000$. The two budgets use separately selected training samples; $N=1,000$ was a subsequent check of the initial smaller-budget result. The GP--MoE follows the same global--local blend and minimum expert size, with unstratified two-fold global diagnostics and an 80-iteration limit in GP hyperparameter optimization. Its encoder follows the same cross-fitted information path. All 20 configurations are evaluated against reference fits rebuilt in one Linux environment.

\subsection{Supervision controls and physical interpretation}\label{sec:controls}
At $N=1,000$, the CH attribution experiment keeps the encoder and response backends fixed and changes only auxiliary supervision. The random-projection control uses three unit-norm Gaussian projections of standardized design descriptors; its coordinate system is defined from the outer training designs. The shuffled-state control jointly permutes the three Burn columns across training cases, preserving their marginal and joint distribution while breaking their case correspondence. Both controls retain the same three-output encoder and cross-fitting procedure. Recorded Burn states supplied directly at training and test form an additional information-condition reference that is unavailable at design-only deployment.

The submission-time protocol names Burn as the primary state set and evaluates every split/backend cell. With $G$ defined by \eqref{eq:improvement}, its strong-support criterion requires positive Burn gain and each negative control to retain at most one quarter of that gain in all four cells. Seedwise differences are also summarized using 10,000 paired bootstrap resamples.

A further state set is motivated by the no-alpha ignition parameter of Betti et al.~\cite{betti2015}. Using the available recorded diagnostics, we form
\begin{equation}
 \widetilde\chi_{\mathrm{off}}=(\rho R_{\mathrm{off}})^{0.61}
 \left(0.24\,\frac{Y_{\mathrm{off}}/10^{16}}{M_{0,\mathrm{DT}}[\mathrm{mg}]}\right)^{0.34}.
 \label{eq:betti}
\end{equation}
Here $\rho R$ is in $\mathrm{g\,cm^{-2}}$. The published parameter uses an unablated-fuel mass and a shell diagnostic; we substitute initial total DT mass and burn-weighted total-DT areal density. Equation~\eqref{eq:betti} therefore defines a Betti-inspired proxy. The three supervised log coordinates are areal density, yield per initial fuel mass, and this proxy. The third is an affine combination of the first two, so the set carries at most two independent coordinates. The same off yield appears in \eqref{eq:response} and \eqref{eq:betti}; this shared normalization contributes to their association.
The coefficient 0.24 follows Eq.~(11) of Betti et al.~\cite{betti2015}; any fixed positive coefficient only shifts the logged coordinate, and training-fold standardization removes that shift.

\subsection{Baselines and error measures}
The principal architecture control is a global GP with exactly the same response inputs. A second comparison supplies the predicted Burn representation to Extra Trees, histogram gradient boosting, and a multilayer perceptron (MLP). Each family has 12 candidate configurations selected by three-fold inner validation of the outer training set, to separate model selection from outer evaluation~\cite{cawley2010}. The state encoder, including its cross-fitted training states and full-fold prediction model, is rebuilt inside each inner split. This prevents the inner validation states from being generated by an encoder trained on their own physical labels.

Extra Trees uses 500 trees, minimum leaf sizes in $\{1,2,4,8\}$ and maximum feature fractions in $\{0.5,0.8,1\}$. Boosted trees use 400 iterations, learning rates in $\{0.03,0.1\}$, maximum leaf counts in $\{7,15,31\}$, and $L^2$ penalties in $\{0,1\}$. MLP candidates combine hidden widths $(64)$, $(128,64)$, or $(128,128)$ with penalties $0.001$ or $0.1$ and either Adam or L-BFGS optimization. Their iteration limit is 2,000, the L-BFGS function-evaluation limit is 30,000, and the Adam learning rate is $0.001$; early stopping is disabled. Preprocessing is fitted within each training split. The GP settings remain those of Section~\ref{sec:method}.

The main metric is test MSE of the base-ten logarithmic response. Reported error bars are sample standard deviations across split seeds. For paired methods A and B, a positive percentage reduction means
\begin{equation}
 I_{A\to B}=100\left(1-\frac{\overline{\mathrm{MSE}}_B}
 {\overline{\mathrm{MSE}}_A}\right),
 \label{eq:improvement}
\end{equation}
where the bars average MSE over the same seeds. Paired win counts compare the two MSE values within each seed. For regional diagnosis, low amplification is $\Yamp<2$, the cliff neighborhood is $1.5\le\Yamp\le5$, the transition region is $2\le\Yamp\le20$, and the high tail is $\Yamp>100$.

\section{Results}\label{sec:results}
The comparisons first test whether physical supervision improves predictions after the reference solve has been removed from inference. They then examine the state representation, the extra contribution of experts, and portability beyond the principal database.

\subsection{Physical supervision survives design-only inference}
Table~\ref{tab:supervision} and Figure~\ref{fig:controls} show the complete CH supervision experiment. Learned Burn states reduce mean MSE relative to the same-backend Design predictor by \result{e1_random_gp_burn3_gain}\% for GP and \result{e1_random_moe_burn3_gain}\% for GP--MoE on random holdout. On design-radius extrapolation the reductions are \result{e1_radius_ood_gp_burn3_gain}\% and \result{e1_radius_ood_moe_burn3_gain}\%. Each comparison favors Burn in all ten paired seeds. Because test states are predicted, these gains require no reference calculation at deployment.

\begin{table}[tbp]\centering\small
\caption{CH supervision controls at $N=1,000$: mean test MSE of the log amplification over ten seeds per split. Recorded Burn is a separate condition with simulated states supplied at prediction.}\label{tab:supervision}
\begin{tabular}{@{}lrrrr@{}}
\toprule
Supervision & Random GP & Random MoE & OOD GP & OOD MoE\\
\midrule
Design & \result{e1_random_gp_design_mse} & \result{e1_random_moe_design_mse} & \result{e1_radius_ood_gp_design_mse} & \result{e1_radius_ood_moe_design_mse}\\
Peak & \result{e1_random_gp_current3_mse} & \result{e1_random_moe_current3_mse} & \result{e1_radius_ood_gp_current3_mse} & \result{e1_radius_ood_moe_current3_mse}\\
Burn & \result{e1_random_gp_burn3_mse} & \result{e1_random_moe_burn3_mse} & \result{e1_radius_ood_gp_burn3_mse} & \result{e1_radius_ood_moe_burn3_mse}\\
Random projection & \result{e1_random_gp_rand3_mse} & \result{e1_random_moe_rand3_mse} & \result{e1_radius_ood_gp_rand3_mse} & \result{e1_radius_ood_moe_rand3_mse}\\
Shuffled Burn & \result{e1_random_gp_shuffle3_mse} & \result{e1_random_moe_shuffle3_mse} & \result{e1_radius_ood_gp_shuffle3_mse} & \result{e1_radius_ood_moe_shuffle3_mse}\\
Betti-inspired & \result{e1_random_gp_betti3_mse} & \result{e1_random_moe_betti3_mse} & \result{e1_radius_ood_gp_betti3_mse} & \result{e1_radius_ood_moe_betti3_mse}\\
Recorded Burn & \result{e1_random_gp_oracle_burn_mse} & \result{e1_random_moe_oracle_burn_mse} & \result{e1_radius_ood_gp_oracle_burn_mse} & \result{e1_radius_ood_moe_oracle_burn_mse}\\
\bottomrule
\end{tabular}
\end{table}

\begin{figure}[tbp]
\centering
\includegraphics[width=.488\linewidth]{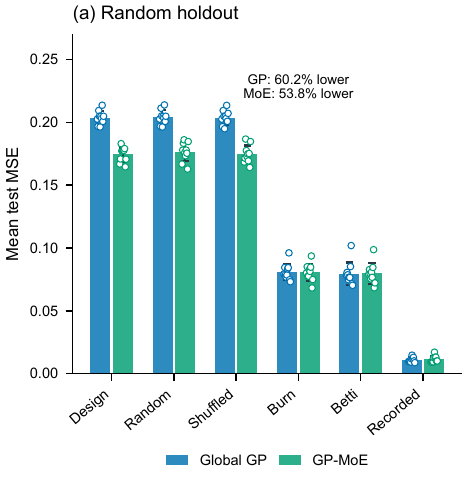}\hfill
\includegraphics[width=.488\linewidth]{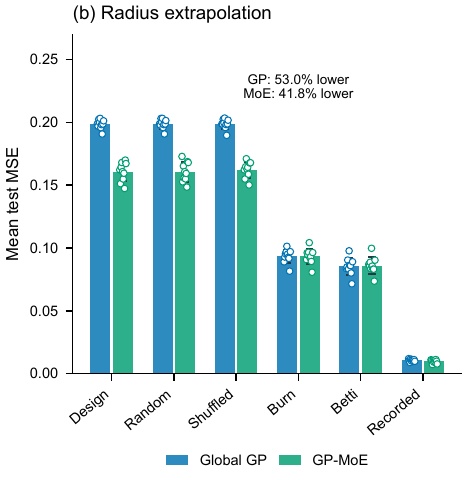}
\caption{CH supervision controls. Bars are mean MSE, whiskers are sample SD, and points show all ten seeds. Annotated reductions compare learned Burn with Design within the same backend. Random and Shuffled preserve the auxiliary pathway while changing its supervision; Recorded supplies true test states.}\label{fig:controls}
\end{figure}

Both negative controls have slightly higher mean error than Design in every cell. Burn beats both controls in all paired seeds, and the paired bootstrap intervals for their excess MSE exclude zero. The preregistered strong-support criterion is met in every cell. The improvement therefore depends on the physical case/state correspondence under the two specified controls.

Direct access to recorded Burn states gives substantially smaller errors than learned states (Table~\ref{tab:supervision}). The physical labels are therefore informative beyond what the finite-data encoder recovers. This comparison changes the training and test information, whereas Proposition~\ref{prop:state} holds the fitted map fixed.

\subsection{Learning curves and physical coordinates}
Figure~\ref{fig:learning} follows the learned representations across six nested training budgets. For GP--MoE, Burn reduces random-holdout MSE relative to Design by \result{dense_representation_gain_random_moe_125}\% at $N=125$ and \result{learn_random_moe_1000}\% at $N=1,000$; the OOD reductions are \result{dense_representation_gain_radius_ood_moe_125}\% and \result{learn_radius_ood_moe_1000}\%. The corresponding GP reductions are \result{dense_representation_gain_random_gp_125}\% and \result{learn_random_gp_1000}\% on random holdout, and \result{dense_representation_gain_radius_ood_gp_125}\% and \result{learn_radius_ood_gp_1000}\% on OOD. The added state supervision benefits both backends; the OOD curves use seven common seeds.

\begin{figure}[tbp]
\centering
\includegraphics[width=.488\linewidth]{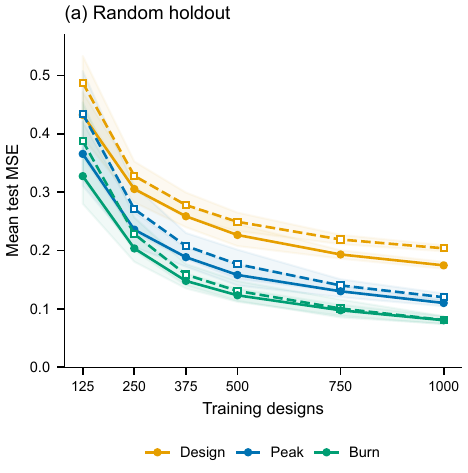}\hfill
\includegraphics[width=.488\linewidth]{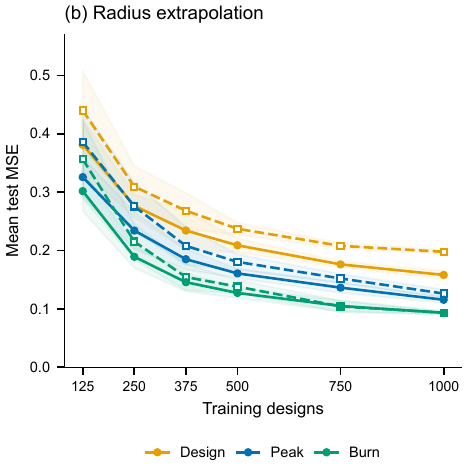}
\caption{CH learning curves over six training budgets. Dashed lines with open squares denote global GP; solid lines with circles denote GP--MoE. Colors identify Design, Peak, and Burn representations. Shading is sample SD. The same ten random-holdout seeds and seven complete OOD seeds enter every budget; inclusion criteria are specified in Section~\ref{sec:protocol}.}\label{fig:learning}
\end{figure}

The fixed-seed parity plots in Figure~\ref{fig:parity} illustrate the change for a global GP. Identical test identities, axis limits, and density scales are used. The improved representation narrows the broad response scatter without changing the deployment input or requiring local experts.

\begin{figure}[tbp]
\centering
\includegraphics[width=.488\linewidth]{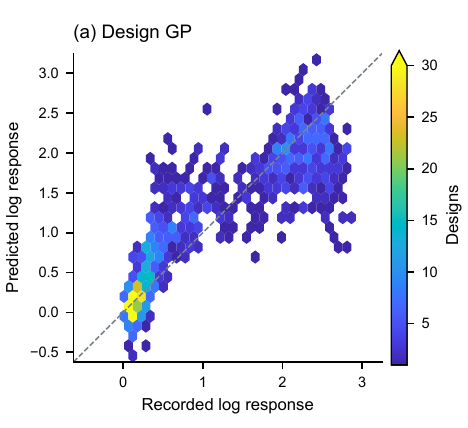}\hfill
\includegraphics[width=.488\linewidth]{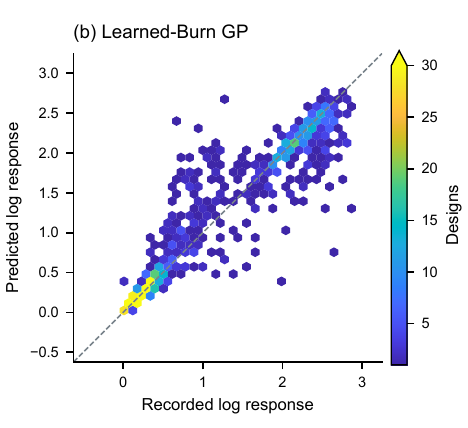}
\caption{Global-GP predictions for random-holdout seed 42 at $N=1,000$, with (a) Design and (b) learned Burn inputs. Both panels contain the same 987 test designs and use identical axis and color limits. Dashed lines indicate equality; colors count designs per hexagon.}\label{fig:parity}
\end{figure}

The physical interpretation of the state set is supported by recorded associations, with appropriate limits. Across the CH library, the Betti-inspired proxy has Spearman coefficient \result{assoc_betti_proxy} with the response (Figure~\ref{fig:physics}b); burn-weighted pressure has coefficient \result{assoc_off_burn_weighted_pstag_gbar}. The proxy, however, contains the off yield that also normalizes the target.

\begin{figure}[tbp]
\centering
\includegraphics[width=.488\linewidth]{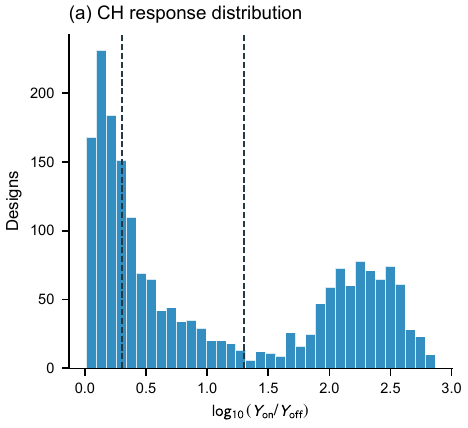}\hfill
\includegraphics[width=.488\linewidth]{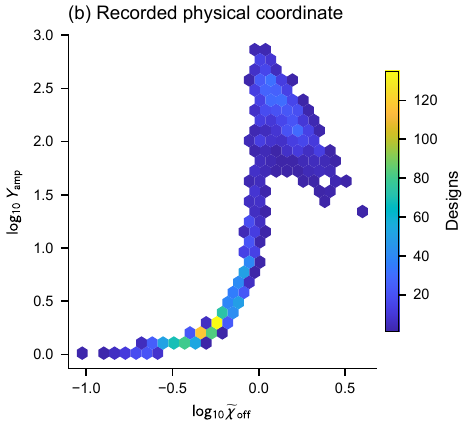}
\caption{CH database diagnostics: (a) response distribution, with thresholds 2 and 20; (b) association with the recorded Betti-inspired proxy of \eqref{eq:betti}. This whole-library diagnostic uses true reference quantities, includes a shared off-yield normalization, and describes the recorded physical coordinates.}\label{fig:physics}
\end{figure}

Learning the Betti-inspired coordinates instead of Burn gives additional mean reductions of \result{e1_random_gp_betti_vs_burn}\% and \result{e1_random_moe_betti_vs_burn}\% on random holdout, and \result{e1_radius_ood_gp_betti_vs_burn}\% and \result{e1_radius_ood_moe_betti_vs_burn}\% on OOD, for GP and GP--MoE respectively. The random-holdout paired intervals cross zero; both OOD comparisons favor the proxy in all ten seeds. A subsequently triggered, fixed-protocol six-budget comparison contains 102 configurations and finds a budget-dependent advantage, including reversals at small budgets. Burn remains the primary state set.

Figure~\ref{fig:states} measures what the encoder actually recovers. Component errors are normalized by the finite log-state population SD in the outer training set. The mean NRMSEs for $(\rho R,T_i,P)$ are $(\result{state_random_rhor},\result{state_random_temperature},\result{state_random_pressure})$ on random holdout and $(\result{state_radius_ood_rhor},\result{state_radius_ood_temperature},\result{state_radius_ood_pressure})$ on OOD. Appreciable state error remains, especially under extrapolation. The response improvement is therefore obtained with imperfect states.

\begin{figure}[tbp]
\centering
\includegraphics[width=.488\linewidth]{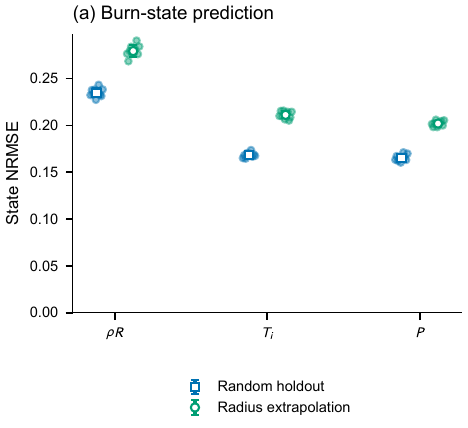}\hfill
\includegraphics[width=.488\linewidth]{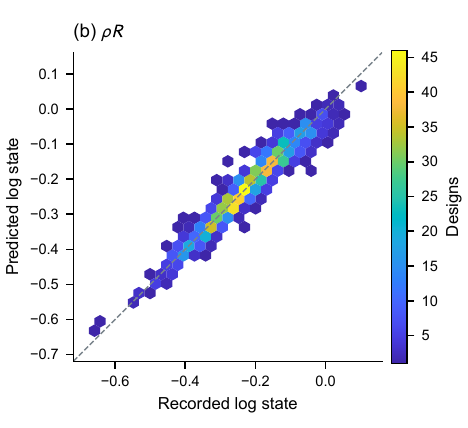}

\includegraphics[width=.488\linewidth]{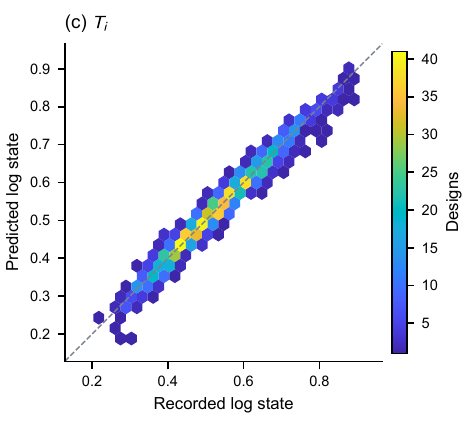}\hfill
\includegraphics[width=.488\linewidth]{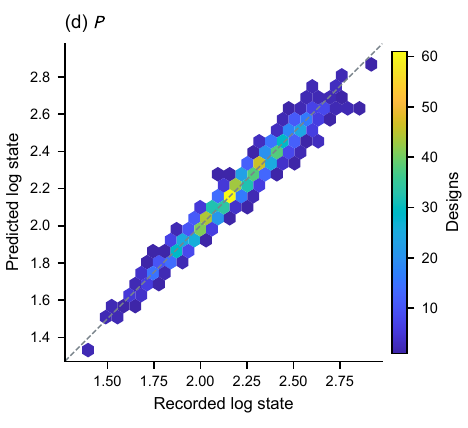}

\caption{CH Burn-state recovery. (a) NRMSE for ten seeds per split, normalized with training-state SD; points, hollow markers, and intervals denote seeds, means, and sample SD. (b--d) Predicted versus recorded log states for fixed random seed 42. Only finite recorded labels are plotted. Each state has its own axes and density scale; all are encoder predictions on held-out designs.}\label{fig:states}
\end{figure}

Peak-to-Burn gains vary across the response range. Figure~\ref{fig:regions} shows the regional reductions. In the OOD high tail, $\Yamp>100$, mean MSE increases by \result{tail_radius_ood_gp}\% for GP and \result{tail_radius_ood_moe}\% for GP--MoE. Random-holdout high-tail errors instead decrease by \result{tail_random_gp}\% and \result{tail_random_moe}\%. The response benefit and the state-estimation problem must therefore be assessed in the region relevant to an application.

\begin{figure}[tbp]
\centering
\includegraphics[width=.488\linewidth]{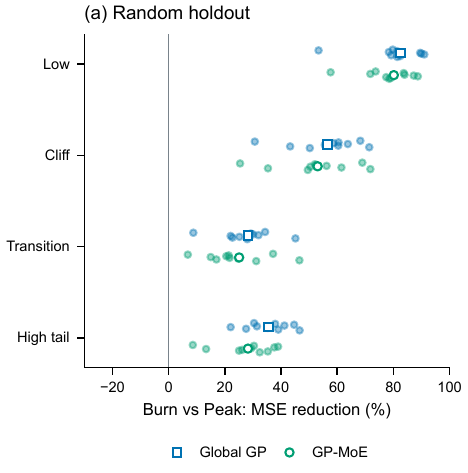}\hfill
\includegraphics[width=.488\linewidth]{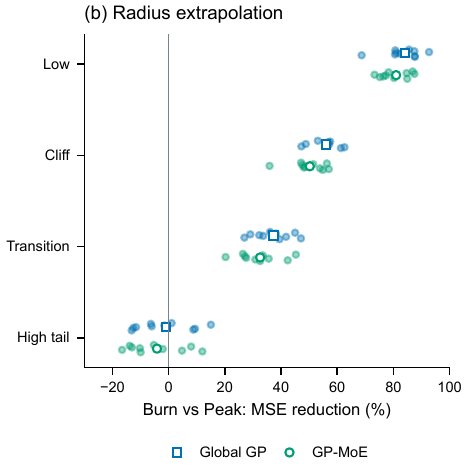}
\caption{CH regional Peak-to-Burn changes. Large markers are ratios of mean MSEs over ten seeds; small points show seedwise percentage changes. Positive values favor Burn. Low, cliff, transition, and high-tail subsets are defined in Section~\ref{sec:protocol} and overlap. The OOD high-tail comparison favors Peak.}\label{fig:regions}
\end{figure}

\subsection{The additional value of experts}
Matched-input comparisons isolate the complete expert-and-gate augmentation from state information. With Burn, the random-holdout gain decreases from \result{dense_expert_gain_random_burn3_125}\% at $N=125$ to \result{expert_random_burn3_1000}\% at $N=1,000$ (Figure~\ref{fig:architecturegain}a). The complete ten-seed $N=1,000$ results give only \result{primary_expert_random}\% gain on random holdout and \result{primary_expert_radius_ood}\% on OOD. At this budget the global Burn predictor is a competitive, simpler choice.

This trend is representation dependent. On Design inputs the random-holdout expert gain increases from \result{expert_random_design_250}\% at $N=250$ to \result{expert_random_design_1000}\% at $N=1,000$. Peak retains an extra benefit at the largest budget. Training size and representation therefore change expert utility along distinct axes. The conditional routing term in \eqref{eq:routingbound} and the signed cross term in \eqref{eq:riskidentity} allow either outcome.

The same-input comparison with internally tuned regressors is shown in Figure~\ref{fig:architecturegain}c,d. GP--MoE reduces mean MSE relative to Extra Trees, the strongest tested non-GP family, by \result{baseline_random_moe_et}\% on random holdout and \result{baseline_radius_ood_moe_et}\% on OOD. The corresponding global-GP gains are \result{baseline_random_gp_et}\% and \result{baseline_radius_ood_gp_et}\%. Both GP backends beat each of the three tuned families in every paired seed on both splits. This supports their competitiveness under the stated tuning budget, while showing that most of this comparison does not require experts. Deep kernels and fully Bayesian nonstationary GPs remain untested alternatives.

\begin{figure}[tbp]
\centering
\includegraphics[width=.488\linewidth]{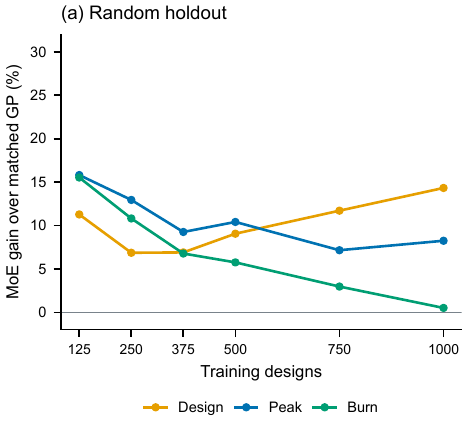}\hfill
\includegraphics[width=.488\linewidth]{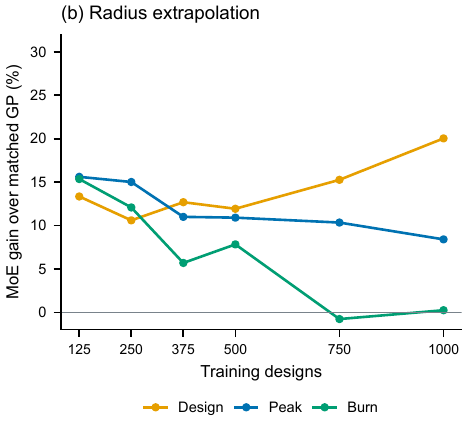}

\includegraphics[width=.488\linewidth]{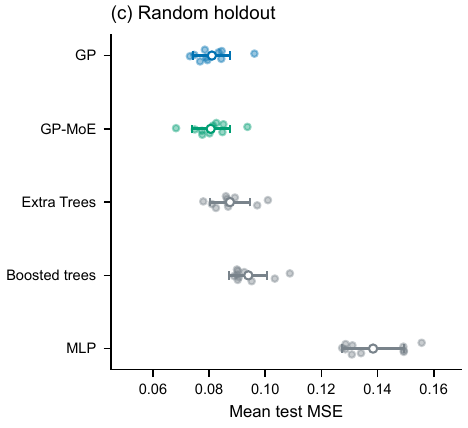}\hfill
\includegraphics[width=.488\linewidth]{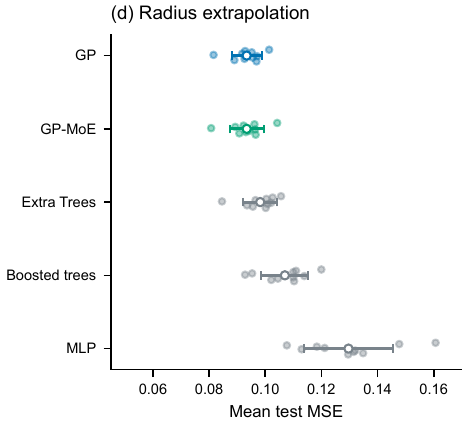}

\caption{Architecture and regression-model comparisons. (a,b) Additional GP--MoE gain over a global GP with identical inputs along the complete-case learning curves. (c,d) Ten-seed $N=1,000$ comparisons with internally tuned regressors, all using predicted Burn states; points show seeds, with mean and sample SD.}\label{fig:architecturegain}
\end{figure}

\subsection{Complete learned-state validation in the thermal model}
The learned-state thermal experiment tests the full training-to-inference pathway (Figure~\ref{fig:thermal}). Relative to Design, the learned GP reduces mean MSE by \result{e6_random_320_gp_gain}\% and \result{e6_random_1000_gp_gain}\% on random holdout at $N=320$ and $1,000$, and by \result{e6_radius_320_gp_gain}\% and \result{e6_radius_1000_gp_gain}\% on radius OOD. The corresponding GP--MoE reductions are \result{e6_random_320_moe_gain}\%, \result{e6_random_1000_moe_gain}\%, \result{e6_radius_320_moe_gain}\%, and \result{e6_radius_1000_moe_gain}\%. Mean improvements occur in all eight same-backend comparisons, although the largest-budget OOD comparisons win only four and three of five seeds for GP and GP--MoE.

The expert effect has a different budget dependence from CH. At $N=320$, its gains relative to the learned GP are \result{e6_random_320_expert}\% and \result{e6_radius_320_expert}\%, both unfavorable. At $N=1,000$ they become \result{e6_random_1000_expert}\% and \result{e6_radius_1000_expert}\%.

For the state-error analysis, each fitted predictor is additionally evaluated after replacing its estimated test states with the exact analytic states. Figure~\ref{fig:thermal}b plots the state error against the change of that same GP. This substitution matches the fixed-map reference in Proposition~\ref{prop:state}. Exact-state fits sometimes have higher error than learned-state fits, consistent with their different finite-sample training representations.

\begin{figure}[tbp]
\centering
\includegraphics[width=.488\linewidth]{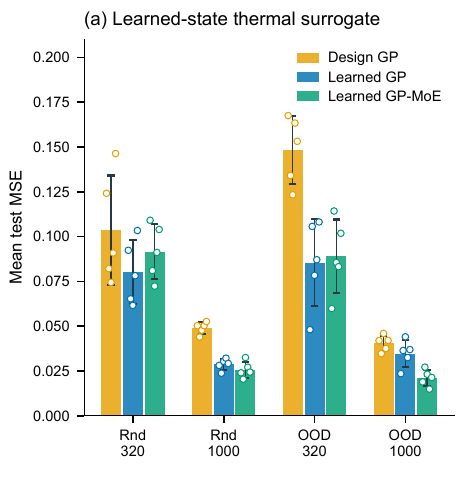}\hfill
\includegraphics[width=.488\linewidth]{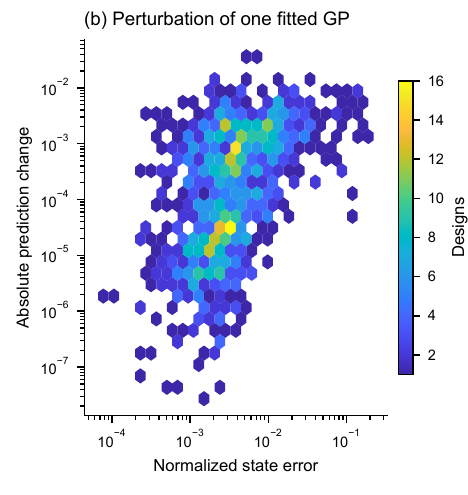}
\caption{Thermal learned-state validation. (a) Mean test MSE, sample SD, and all five seeds for selected backends; the full Design/Exact/Learned GP and GP--MoE results are retained in the numerical tables. Rnd and OOD denote the two fixed splits. (b) Fixed-model state substitution for random seed 11, $N=1,000$: normalized state error versus absolute change in the fitted GP prediction, on logarithmic axes.}\label{fig:thermal}
\end{figure}

\subsection{Be/Cocktail transfer and its limits}
The newly calculated Be/Cocktail test gives mixed evidence (Figure~\ref{fig:be}). On random holdout, learned Burn reduces GP MSE by \result{be_random_250_design_gp_burn_gp_gain_pct}\% at $N=250$ and \result{be_random_1000_design_gp_burn_gp_gain_pct}\% at $N=1,000$, winning four of five seeds at each budget. The paired bootstrap intervals for these gains exclude zero. Thus the physical-state pathway can help beyond the principal capsule family.

The physical-radius extrapolation test is less favorable. Its GP reductions are only \result{be_radius_ood_250_design_gp_burn_gp_gain_pct}\% and \result{be_radius_ood_1000_design_gp_burn_gp_gain_pct}\%, with intervals spanning zero. The learned mixture is worse than the matched learned GP at $N=1,000$ in every seed: the gain is \result{be_random_1000_burn_gp_burn_moe_gain_pct}\% on random holdout and \result{be_radius_ood_1000_burn_gp_burn_moe_gain_pct}\% on OOD. The latter comparison has a paired interval wholly below zero. The target-family change exposes a distinct extrapolation difficulty and a clear preference for the global backend at the larger budget.

\begin{figure}[tbp]
\centering
\includegraphics[width=.488\linewidth]{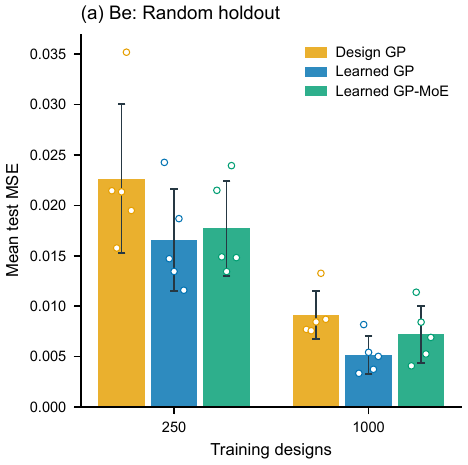}\hfill
\includegraphics[width=.488\linewidth]{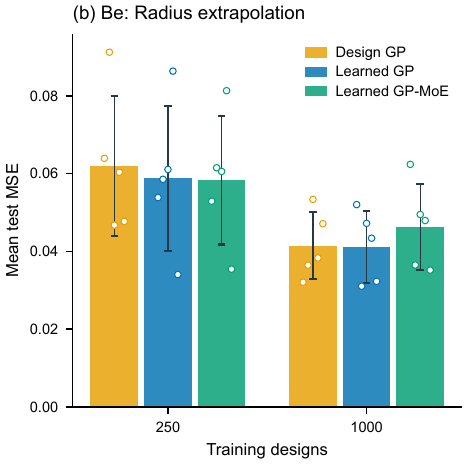}
\caption{New Be/Cocktail database: mean MSE, sample SD, and five seeds at each budget. Learned states are predicted from the 17-dimensional design representation. The OOD split reserves the largest physical outer radii, distinct from CH design-radius OOD. The complete matched Design-MoE controls and paired intervals are provided with the numerical record.}\label{fig:be}
\end{figure}

\subsection{Numerical sensitivity of the underlying ICF data}\label{sec:grid}
The training database uses 200 cells. Against the finest tested mesh of 1,600 cells, its six-design audit gives a maximum absolute log-amplification difference of \result{grid_200_log10_yamp_max_abs_difference} dex and a median of \result{grid_200_log10_yamp_median_abs_difference} dex (Table~\ref{tab:grid}). For scale, the random-holdout Burn GP has a root-mean-square error of \result{burn_random_rmse} dex, obtained by taking the square root of its mean seedwise MSE. The six selected mesh differences and the full-test predictive RMSE measure different errors and populations; their similar scale warns against interpreting statistical accuracy as continuum accuracy.

\begin{table}[tbp]\centering\small
\caption{Six-design mesh sensitivity against 1,600 cells. The 200-cell row is the database resolution. Log-response differences are absolute; the remaining four columns are maximum absolute relative differences. Time controls are unchanged, and the finest mesh is a comparison reference rather than an established continuum solution.}\label{tab:grid}
\begin{tabular}{@{}lrrrrrr@{}}
\toprule
Cells & $|\Delta y|_{\max}$ & $|\Delta y|_{\rm med}$ & $Y_{\rm off}$ (\%) & $\rho R$ (\%) & $T_i$ (\%) & $P$ (\%)\\
\midrule
200 & \result{grid_200_log10_yamp_max_abs_difference} & \result{grid_200_log10_yamp_median_abs_difference} & \result{grid_200_off_neutrons_max_abs_relative_pct} & \result{grid_200_off_burn_weighted_rhordt_max_abs_relative_pct} & \result{grid_200_off_burn_weighted_tion_kev_max_abs_relative_pct} & \result{grid_200_off_burn_weighted_pstag_gbar_max_abs_relative_pct}\\
400 & \result{grid_400_log10_yamp_max_abs_difference} & \result{grid_400_log10_yamp_median_abs_difference} & \result{grid_400_off_neutrons_max_abs_relative_pct} & \result{grid_400_off_burn_weighted_rhordt_max_abs_relative_pct} & \result{grid_400_off_burn_weighted_tion_kev_max_abs_relative_pct} & \result{grid_400_off_burn_weighted_pstag_gbar_max_abs_relative_pct}\\
800 & \result{grid_800_log10_yamp_max_abs_difference} & \result{grid_800_log10_yamp_median_abs_difference} & \result{grid_800_off_neutrons_max_abs_relative_pct} & \result{grid_800_off_burn_weighted_rhordt_max_abs_relative_pct} & \result{grid_800_off_burn_weighted_tion_kev_max_abs_relative_pct} & \result{grid_800_off_burn_weighted_pstag_gbar_max_abs_relative_pct}\\
\bottomrule
\end{tabular}
\end{table}

At the database resolution, maximum relative differences reach \result{grid_200_off_burn_weighted_tion_kev_max_abs_relative_pct}\% for off temperature and \result{grid_200_off_burn_weighted_pstag_gbar_max_abs_relative_pct}\% for pressure. Areal-density differences decrease from \result{grid_200_off_burn_weighted_rhordt_max_abs_relative_pct}\% to \result{grid_400_off_burn_weighted_rhordt_max_abs_relative_pct}\% and \result{grid_800_off_burn_weighted_rhordt_max_abs_relative_pct}\% under refinement. Off-yield differences likewise fall from \result{grid_200_off_neutrons_max_abs_relative_pct}\% to \result{grid_800_off_neutrons_max_abs_relative_pct}\%, but remain appreciable at 200 cells. Temperature and pressure are substantially more mesh sensitive over the tested range.

The Betti-inspired supervision uses areal density and off yield per initial fuel mass, without the recorded temperature or pressure columns. Its competitive prediction errors show that the observed supervision benefit can be obtained without those two diagnostics. They do not establish mesh independence: the off yield changes with resolution, and no cross-resolution surrogate test was performed. All learning comparisons therefore retain their specified database, while validation on physically converged labels remains a separate requirement.

\section{Discussion and conclusions}\label{sec:discussion}
The principal contribution is a deployable use of physical-state supervision. A reference calculation supplies named physical labels during training; cross-fitted estimates connect those labels to the information available at prediction. A retained design path avoids imposing a three-variable physical closure, and matched global and expert predictors allow representation and architecture to be assessed separately. The construction combines established statistical components in an explicit train-to-deployment architecture.

The supervision controls give the strongest attribution evidence. The CH improvements persist after replacing recorded test states by encoder predictions, and neither random projection nor shuffled physical labels reproduces them through the same pathway. This supports the value of case-aligned physical supervision for the tested database. Since the predicted states are deterministic functions of the design, the benefit lies in the finite-sample regression problem, through information learned from auxiliary labels.

The state-estimation stage leaves substantial headroom. On random holdout, the GP fitted with recorded Burn states has mean MSE \result{e1_random_gp_oracle_burn_mse}, compared with \result{e1_random_gp_burn3_mse} for learned states. This diagnostic gap includes both state-prediction error and the different inputs used to train the response fits; it is not the fixed-map perturbation of Proposition~\ref{prop:state}. Improving the encoder, especially in extrapolation and the high-response tail, is a more direct next step than adding expert complexity to an already informative representation.

Expert specialization has a narrower role. It helps several small-sample CH comparisons, contributes little at $N=1,000$ with Burn inputs, improves the larger thermal tests, and degrades the larger Be tests. These outcomes support choosing the representation first and validating the optional expert extension at matched inputs. The error analysis leaves encoder sensitivity, expert disagreement, and signed regression--routing interaction explicit, consistent with these different outcomes.

The evidence has physical and statistical limits. CH and Be share a one-dimensional solver family and do not resolve multidimensional mixing. The mesh audit does not establish convergence of the recorded response or reference states. The CH database and test designs have also informed model development, and seedwise uncertainty does not represent replication across independent physical databases. The Be extrapolation losses and CH high-tail losses remain part of the assessment. The method is validated here as a surrogate of specified numerical datasets, not as an experimentally validated predictor of ignition.

Computational reproducibility requires both scientific definitions and a controlled implementation. The accompanying package separates generic fit/predict operations from simulator-specific adapters, retains per-design predictions and manifests, and reproduces one complete accepted CH configuration at the original numerical tolerance. The thermal example permits data generation and training from scratch without proprietary material dependencies; regeneration of ICF trajectories requires the separately identified solver and tables.

Learning physical reference states improves design-only prediction in the principal CH comparisons, the thermal benchmark, and the random Be holdout. The additional value of GP experts is conditional on representation, sample size, and transfer setting. Together, these results establish an attributable physical-state-supervised architecture and a reproducible evaluation procedure, while identifying where a simpler global predictor and further physical-data verification are needed.

\section*{Acknowledgements}
Chang Liu is supported by the National Key R\&D Program of China (2022YFA1004500), the Beijing Natural Science Foundation (Z230003), the National Natural Science Foundation of China (12031001, 12102061), and the Presidential Foundation of the China Academy of Engineering Physics (YZJJZQ2022017).

The authors used AI-assisted tools to polish the language of the manuscript and to assist with routine scripting for data processing and figures. The core ideas, model formulation, software, numerical calculations, validation, and interpretation of the results are the work of the authors, who reviewed all assisted content and take full responsibility for this article.

\section*{Data and code availability}
The supporting design, response and state tables, train/test identifiers, per-design predictions, and evaluation records are openly available in Zenodo (version 1.0.0) at \url{https://doi.org/10.5281/zenodo.23253596}, under the Creative Commons Attribution 4.0 International licence (CC BY 4.0). The data archive includes file-level checksums and format descriptions. The \texttt{pssgp} implementation, examples, tests, and experiment utilities are openly available separately (version 0.1.0) at \url{https://doi.org/10.5281/zenodo.23255040}, under the MIT License. The generic package and independent thermal example require no ICF simulator. Archived ICF workflow utilities require the external dependencies documented in the software README; raw ICF trajectories and solver/material dependencies are excluded from both releases.

\section*{CRediT authorship contribution statement}
\textbf{Chang Liu:} Conceptualization, Methodology, Software, Investigation, Supervision, Writing---original draft.
\textbf{Hao Shen:} Methodology, Investigation, Writing---original draft, Writing---review \& editing.
\textbf{Chuanying Li:} Formal analysis, Data curation, Writing---review \& editing.

\section*{Declaration of competing interest}
The authors declare that they have no known competing financial interests or personal relationships that could have appeared to influence the work reported in this paper.

\end{document}